\documentclass[aps,prx,twocolumn,superscriptaddress,noeprint,longbibliography, nofootinbib]{revtex4-1}

\usepackage{graphicx}% Include figure files
\usepackage{bm}% bold math
\usepackage{physics}
\usepackage{xcolor}
\usepackage{enumitem}
\usepackage{amsmath, amssymb}
\usepackage[normalem]{ulem}
\usepackage{natbib}
\usepackage{bbm}
\usepackage{comment}
\usepackage{mathrsfs}
\usepackage{amsmath}
\usepackage{amsthm}

\theoremstyle{definition} 
\newtheorem{theorem}{Proposition}

\usepackage[colorlinks=true]{hyperref}
\hypersetup{citecolor = blue}

\newcommand{\im}{\textrm{i}}

\newcommand{\vphi}{\varphi}

\newenvironment{centeq}
    {\begin{equation} \begin{aligned}}
    {\end{aligned} \end{equation}}

\begin{document}

\title{Anyon Crystals and Hall Crystals in a Periodic Potential}

\author{Sayak Bhattacharjee}
\email{sayakbhattacharjee@stanford.edu}
\affiliation{Leinweber Institute for Theoretical Physics, Stanford University, Stanford, CA 94305, USA}

\author{Julian May-Mann}
\affiliation{Leinweber Institute for Theoretical Physics, Stanford University, Stanford, CA 94305, USA}

\author{Srinivas Raghu}
\affiliation{Leinweber Institute for Theoretical Physics, Stanford University, Stanford, CA 94305, USA}

\begin{comment}

\end{comment}

\begin{abstract}

We obtain integer and fractional quantum Hall crystals as ground states of a two-dimensional electron system subject to a strong perpendicular magnetic field and a periodic potential. For certain fractional states, we show that the Hall crystal can constitute an \textit{anyon crystal}, with a periodic ordering of well-defined anyons. We find  that the latter states can be stabilized at odd denominator Landau level filling fractions when Landau level mixing is sufficiently weak, and near half-filling of the underlying lattice. These phases are obtained from a mean-field analysis of an effective lattice model of bosons attached to an odd number of flux quanta, which transmutes their statistics to that of electrons. In boson coordinates, the Hall crystal is a supersolid: a superfluid with charge order.  Under strong interactions, vortex-anti-vortex pairs spontaneously nucleate in the supersolid, realizing a crystalline state of anyons.

\end{abstract}

\maketitle

%\tableofcontents
%\subsection*{Introduction}

\textit{Introduction}\:---\:Quantum phases of matter in which  topological order coexists with broken symmetry have been of significant interest~\cite{sondhi1993skyrmions, tevsanovic1989hall}. 
One such phase, the Hall crystal, has attracted renewed interest, following experiments in two-dimensional materials~\cite{lu2024fractional, aronson2025displacement}. The Hall crystal (HC) is characterized by a quantized Hall conductivity~\cite{footnote_pin} and a periodic modulation of the density that breaks translational symmetry~\cite{kivelson1986cooperative, kivelson1987cooperative, tevsanovic1989hall, halperin1986compatibility, murthy2000hall, paul2025designing, balents1996spatially, narevich2001hamiltonian, may2026composite}. Both integer and fractional Hall crystals are viable possibilities, identified by their Hall conductivity.

We consider lattice analogs of Hall crystals where the discrete translation symmetry of the lattice is spontaneously broken. These lattice Hall crystals are interesting in their own right, since lattice quantum Hall states exhibit distinctive phenomena that are absent in continuum Landau levels (LL)~\cite{hofstadter1976energy}.

To study such states, we utilize composite boson theory~\cite{zhang1989effective, zhang1992chern}, wherein each boson is `attached' to an odd number of flux quanta, imparting fermionic statistics. Thus, electrons in the periodic potential are described by an extended Bose-Hubbard (BH) model~\cite{fisher1989boson} with additional non-local interactions mediated by gauge fluctuations. For fixed rational magnetic filling, we find that the boson model realizes superfluid, supersolid~\cite{leggett1970can, prokof2005supersolid, andreev1969quantum, matsuda1970off}, and bosonic Wigner-Mott (WM)~\cite{hubbard1978generalized} insulating phases, which correspond to quantum Hall state (HS), Hall crystal, and electronic WM insulating phases, respectively~\cite{balents1996spatially, zhang1992chern, may2026composite}.

\begin{figure}
    \centering    \includegraphics[width=\columnwidth]{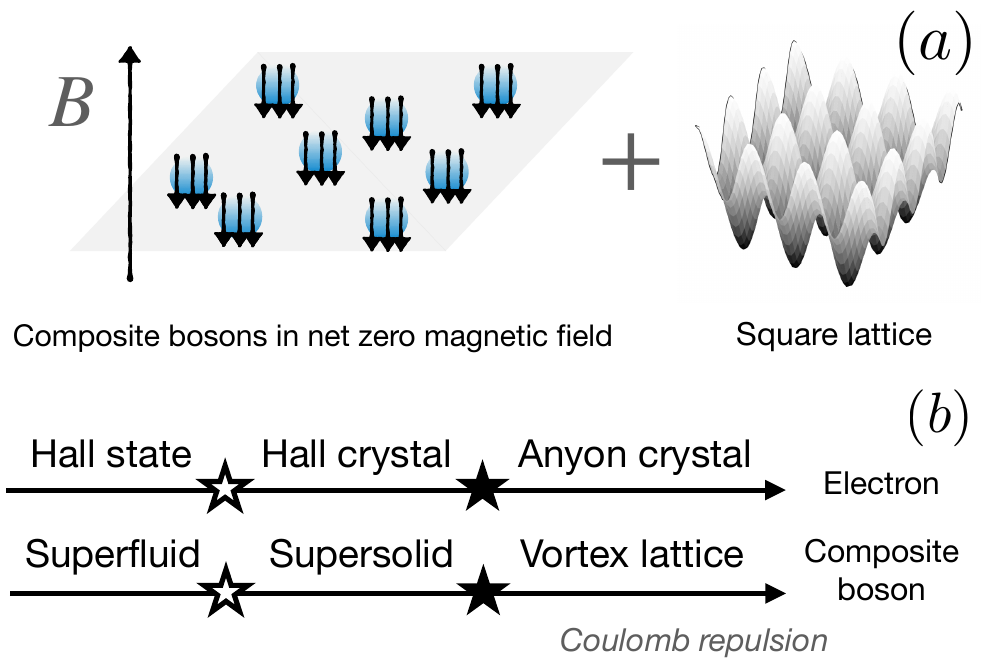}
    \caption{\textbf{(a)} Schematic of the two dimensional electron system in the presence of a square lattice in composite boson coordinates. We depict composite bosons (blue circles) `attached' to an odd number of flux quanta; say, 3 for $\nu=1/3$. \textbf{(b)} Schematic phase diagrams in electron and composite boson coordinates plotted on an axis denoting increasing ratio of Coulomb interactions to kinetic energy, towards the right (at fixed LL mixing). The anyon crystal at strong interactions is obtained at weak LL mixing and near half filling, say, a lattice density of $9/16$ and $\nu=1/3$. In this case, the Hall state and Hall crystal are separated by a continuous phase transition (white star) while the transition between the Hall crystal and the anyon crystal is first-order 
    (black star).}
    \label{fig:fig1}
\end{figure}

Additionally, in certain parameter regimes, we find that the bosonic model hosts a spontaneously formed lattice of vortex-anti-vortex pairs. In electronic coordinates, this is an \textit{anyon crystal} (AC)~\cite{lee1989anyon, may2026composite, pichler2026microscopic, kuhlenkamp2025robust}, an exotic type of fractional Hall crystal that hosts a lattice of well-defined anyons, in addition to ordinary density modulations. Importantly, these crystallized anyons are not produced by doping quasiparticles into an existing fractionalized state. Instead, strong interactions spontaneously nucleate oppositely charged anyon pairs.   

We study this problem as a function of the ratio of kinetic energy to electronic interactions and the ratio of interactions to the cyclotron frequency $\omega_c$ (the LL mixing parameter). Generically, for strong electronic interactions, a WM insulator is obtained, which upon increasing kinetic energy melts into the HS through an intermediate HC state. At lattice fillings $(\bar{\rho})$ near half filling and sufficiently weak LL mixing, the WM insulator is replaced by an AC. This corresponds to a region of parameter space where the ground state is the result of dominant statistical interactions over the kinetic energy or electronic interactions.

\textit{Lattice Hamiltonian}\:---\:In the continuum, the first-quantized Hamiltonian $H$ of the two-dimensional electronic system subjected to a strong perpendicular magnetic field $(B)$ and a periodic potential is given by,
\begin{equation}\label{electronic_Hamiltonian}
H=\frac{1}{2m}\sum_{i}\left(\bm{p}_i-\bm{A}\right)^2+\sum_{i<j}\mathcal{V}(\bm{r}_{ij})+\sum_i V_{\textrm{lat}}(\bm{r}_i),
\end{equation}
where $\mathcal{V}$ denotes the repulsive interaction and the third term denotes the lattice potential. ($\bm{r}_{ij}=\bm{r}_i-\bm{r}_j$, $\hbar=c=e=1$.) Specifically, we consider a square lattice (with unit lattice constant) and screened Coulomb repulsion (limited to nearest neighbors). The background gauge field is $\bm{A}$, so that $B=\bm{\nabla}\times \bm{A}$. We denote the first and second term in Eq.~\ref{electronic_Hamiltonian} by $H_{\textrm{kin}}$ and $H_{\textrm{int}}$ respectively for future convenience. We shall also limit to magnetic fillings $\nu\: (=\bar{\rho}\phi_0/B)$, where $\nu^{-1}$ is an odd integer, and $\phi_0=2\pi$ is the quantum of flux.  

To switch from electronic to composite boson coordinates in Eq.~\ref{electronic_Hamiltonian}, we introduce a coupling to a statistical gauge field $\bm{a}(\bm{r})$ in the Hamiltonian. It is convenient to consider a shifted gauge field, $\bm{a}'=\bm{A}-\bm{a}$, so that the kinetic energy of the bosons is $H_{\textrm{kin}}=(1/2m)\sum_i\left(\bm{p}_i-\bm{a}'(\bm{r}_i)\right)^2$. The interaction and lattice potential terms of the Hamiltonian preserve their form. The bosons have fermionic statistics, which is implemented by a Gauss' law enforcing flux attachment, 
\begin{equation}\label{eq:flux_attachment}
    b'(\bm{r}) =\bm{\nabla}\times \bm{a}'(\bm{r})=\phi_0\nu^{-1}\left[\bar{\rho}-\rho(\bm{r})\right]=\phi_0\nu^{-1}\delta\rho(\bm{r}) 
\end{equation}
where $\rho(\bm{r})=\sum_j\delta^{2}(\bm{r}-\bm{r}_j)$. One smears the statistical flux $b'$ in space, allowing for a mean-field analysis~\cite{zhang1989effective}. In the superfluid, when the modulating density $\langle \delta\rho\rangle =0$, the bosons feel no magnetic fluxes ($\langle b'\rangle =B-\langle b\rangle =0$), where $\langle \cdot \rangle$ denotes the ground-state expectation value of the operator. In fact, the bosons feel no net flux in all of the phases obtained at fixed odd denominator $\nu$.

We integrate out the statistical gauge field using the flux attachment constraint [\textcolor{blue}{SM}]. We do this in Coulomb gauge ($\bm{\nabla}\cdot \bm{a}'=0$). Discretizing the continuum Hamiltonian, we obtain a boson lattice Hamiltonian $H_{\textrm{eff}}=H_{\textrm{BH}}+H_{\textrm{int}}+H_{\textrm{st-int}}$. $H_{\textrm{BH}}$ is the Bose-Hubbard Hamiltonian, $H_{\textrm{int}}$ represents finite-range electronic interactions and $H_{\textrm{st-int}}$ is the statistical interaction:
\begin{align}
\nonumber  H_{\textrm{BH}}=&-t\sum_{\langle ij\rangle }[\varphi^\dagger_i\varphi_j + \textrm{H.c.}]+\frac{U}{2}\sum_i \rho_i(\rho_i-1),\\
\nonumber H_{\textrm{int}}=&\: V\sum_{\langle ij\rangle}\rho_i\rho_j,\\
\nonumber  H_{\textrm{st-int}}=&-\frac{\omega_c}{\nu}\bigg[\sum_{i<j}\ln(r_{ij})\delta \rho_i\delta \rho_j+\frac{1}{2\pi\bar{\rho}}\sum_{i}\delta \rho_i (\bm{\mathcal{A}}_i)^2+ \\
 &\frac{\nu}{2\pi\bar{\rho}}\sum_{i}\bm{J}_i \cdot \bm{\mathcal{A}}_i\bigg],\label{eq:ham_CS} 
\end{align}
where $\varphi$ $(\varphi^\dagger)$ is the boson annihilation (creation) operator, $\rho=\varphi^\dagger \varphi$ the boson density, and $t$  the effective nearest-neighbor boson hopping amplitude. The density-dependent potential is denoted by $\bm{\mathcal{A}}_i = \sum_{j}\bm{g}(\bm{r}_{ij})\delta\rho_j$, where $\bm{g}(\bm{r})=\hat{z}\times (\bm{r}/r^2)$ is the Biot-Savart kernel. The current operator $J_i^\alpha=(j_i^\alpha)_p+ (j_{i-e_\alpha}^\alpha)_p$, where $(j_i^\alpha)_p = -\textrm{i}\varphi^\dagger_i \varphi_{i+e_\alpha}+\textrm{h.c.}$ is the paramagnetic current, $\alpha=x,y$ and $e_\alpha$ is the unit vector. This definition of the current operator ensures that the current-density interaction (third term in $H_{\textrm{st-int}}$) preserves inversion symmetry on the square lattice.

The logarithmic repulsion---the two-dimensional Coulomb potential related to the `statistical' photon---arises from the diamagnetic term $\propto \rho a'^2$, as does the three-body term (second term in $H_{\textrm{st-int}}$). It lifts the superfluid phonon to the cyclotron frequency at zero momentum, consistent with Kohn's theorem~\cite{zhang1992chern}. The current-density interaction induces a circulating current around a localized charge density with a strength inversely related to the distance from the charge, as expected from attaching magnetic fluxes to the bosons. Since it breaks time-reversal symmetry, it can induce vortex lattice ground-states, which correspond to ACs.

\textit{Mean-field calculation}\:---\:For the mean-field calculation, we work in the canonical ensemble. We employ the Gutzwiller variational wavefunction~\cite{krauth1992gutzwiller}. The wavefunction is a product state given by, 
\begin{equation}\label{gutzwiller}
    \ket{\psi\{c_{im}\}}=\bigotimes_i \sum_{m=0}^Mc_{im}\ket{i,m},
\end{equation}
where $\ket{i,m}$ is the Fock state at site $i$ with $m \: (\leq M)$ bosons. We minimize the energy $\mel{\psi}{H_{\textrm{eff}}}{\psi}$ over a set of complex variational parameters $\{c_{im}\}$ subject to the normalization constraints $\sum_m |c_{im}|^2=1$ for all $i$. For simplicity, in this work, we shall primarily focus on the hard-core limit of the bosons, which corresponds to $U\rightarrow \infty$ and $M=1$. Broadly, our results do not change when $U$ is finite.

Before analyzing the various Hall states, it is useful to consider the $\omega_c\rightarrow 0$ limit where the magnetic field is absent. In this limit, the model (Eq.~\ref{eq:ham_CS}) is the extended BH model with nearest-neighbor interactions. For hard-core bosons and at half-filling $(\bar{\rho}=1/2)$, a superfluid transitions to a checkerboard WM insulator without an intermediate supersolid phase via a first-order transition~\cite{kimura2011gutzwiller}. At small but finite $\omega_c$, we find that this phase structure remains unaltered [\textcolor{blue}{SM}]. Upon relaxing the hard-core constraint, an intermediate supersolid is obtained at half-filling and $\omega_c=0$~\cite{iskin2011route, ohgoe2012ground}. This evolves into a checkerboard HC at finite $\omega_c$. 

In addition to the phases discussed above that extend from known phases of the extended BH model, there is also potential for new types of crystals at finite $\omega_c$ that do not have direct analogs in time-reversal symmetry preserving BH models. Specifically, due to the current-density interaction, it is possible to have crystals with spontaneously formed vortices and anti-vortices. In the original electronic problem, such a crystal is a crystal of anyons.

While ACs are a generic possibility in strongly interacting quantum Hall systems, ACs can be disallowed in the model we are studying due to geometric constraints arising from the Biot-Savart kernel $\bm{g}(\bm{r})$. Namely, we find that within the Gutzwiller mean-field ansatz, the three-body interaction and current-density interaction have zero expectation value for any charge density that is inversion symmetric about every point on the lattice. This is easily observed by inverting the potential $\bm{\mathcal{A}}_i=\sum_j \bm{g}(\bm{r}_{ij})\delta\rho_j$ about a site $i$. For crystals with such a charge density, the vortex lattice cannot arise as a ground-state at rational $\nu$, because the coefficients of the variational wavefunction can be chosen to be real with impunity. 

Thus, to obtain an AC, one requires a crystal that does not have inversion symmetry about every point in the unit cell. This can be stabilized by the three-body interaction, which favors anisotropic charge densities. To account for this, we consider a somewhat large $4\times 4$ super-unit cell on the square lattice, which is sufficient to capture the relevant physics. We will further restrict our attention to filling $\bar{\rho}=9/16$ and fractional LL filling $\nu=1/3$, where all of the phases of interest are present. Results for other lattice and magnetic fillings are provided in [\textcolor{blue}{SM}]. 

The variational energies are evaluated by resolving the long-ranged interactions onto the unit cell using the Ewald summation method~\cite{bonsall1977some} [\textcolor{blue}{SM}]. To determine the ground states, we seed with $\sim20$ initial states, which involve both random and symmetric ansatze, as well as vortex lattices.

\begin{figure}
    \centering
    \includegraphics[width=1\linewidth]{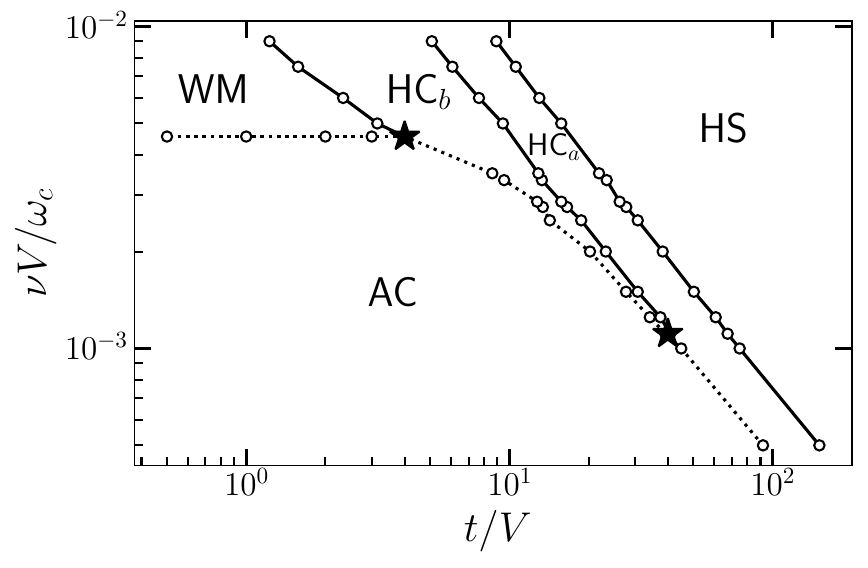}
    \caption{Phase diagram of the Hamiltonian in Eq.~\ref{eq:ham_CS} in the $t/V-\nu V/\omega_c$ plane on a log-log plot for $\bar{\rho}=9/16$, $\nu=1/3$ and $U\rightarrow \infty$.  HC$_a$ and HC$_b$ denote Hall crystals---HC$_a$ is a checkerboard Hall crystal, while $\textrm{HC}_b$ has a super-unit cell of $2\times 2$, just as the Wigner-Mott (WM) insulator. HS denotes the fractional Hall state. The crystals are sketched in Fig.~\ref{fig:crystal_orders}(a)-(d). The stars denote critical points. The dotted (solid) lines are first-order (continuous) transitions.}
    \label{fig:phase_diagram}
\end{figure}

\begin{figure*}
    \centering
    \includegraphics[width=1\linewidth]{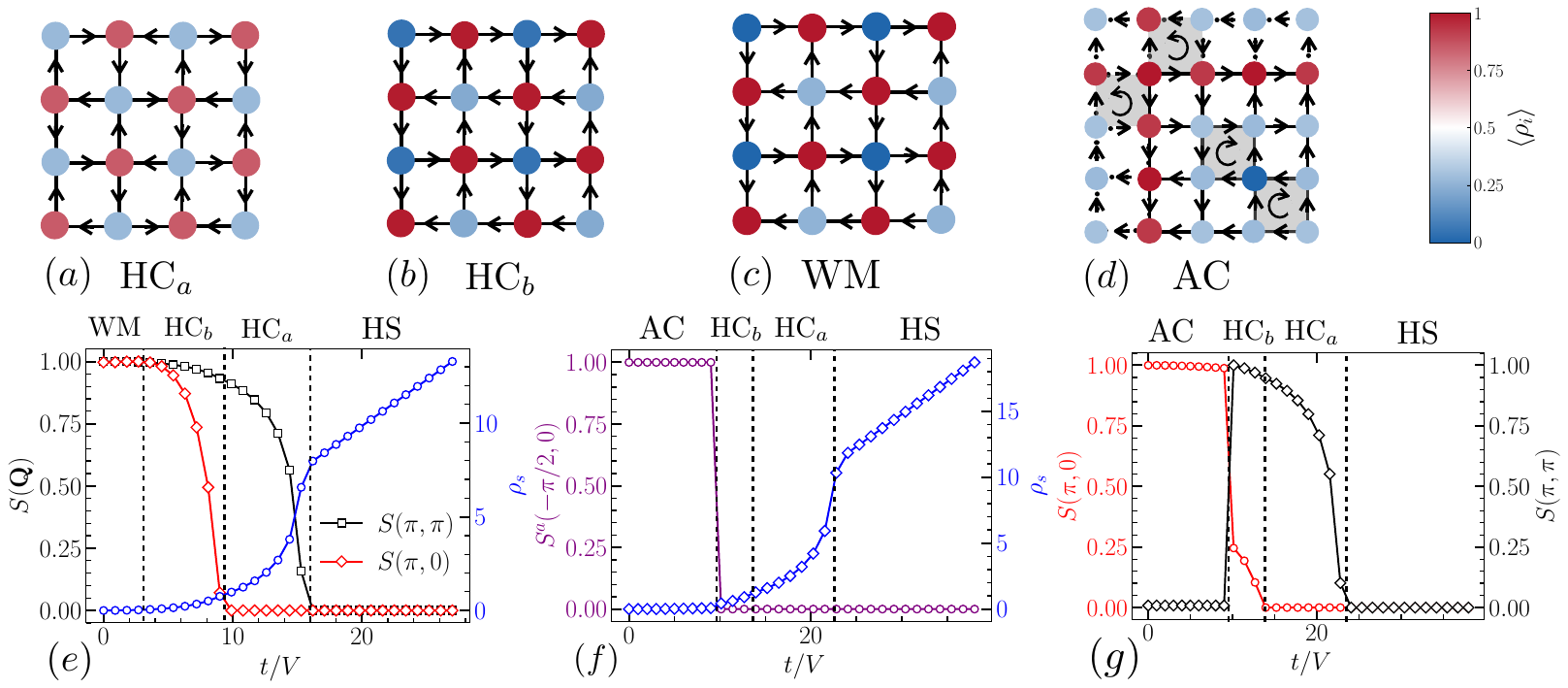}
    \caption{\textbf{(a)}, \textbf{(b)} and \textbf{(c)} depict schematic super-unit cells of the Hall crystals ($\textrm{HC}_a$, $\textrm{HC}_b$) and the Wigner-Mott insulator at $\bar{\rho}=9/16$ and $\nu=1/3$ respectively. The site densities are colored as per the color bar. In the WM insulator, the densities are $1, 1/4$ and $0$. The arrows denote the direction of the orbital currents at the center of the bonds. \textbf{(d)} depicts a schematic $5\times 5$ cell of the AC at $\bar{\rho}=9/16$ and $\nu=1/3$. The grey plaquettes indicate one of two candidate locations of the quasiholes and quasiparticles respectively, with the circular arrow denoting the vorticity. An energetically degenerate location for each pair is $\pi/2$ rotated with respect to the inversion center at the density extrema. \textbf{(e)} Order parameters at $\bar{\rho}=9/16$, $\nu V/\omega_c=0.005$ and $\nu=1/3$ [$S(\pi,0)$ (red, diamond), $S(\pi,\pi)$ (black, square) and $\rho_s$ (blue, circle)] \textbf{(f)} and \textbf{(g)}: Order parameters at $\nu V/\omega_c=0.0033$ and $\nu=1/3$ [$S^a(-\pi/2,0)$ (purple, circle), $\rho_s$ (blue, diamond), $S(\pi,0)$ (red, circle) and $S(\pi,\pi)$ (black, diamond)]. The structure factors are normalized with respect to their maximum value.} 
    \label{fig:crystal_orders}
\end{figure*}

\textit{Results}\:---\:In Fig.~\ref{fig:phase_diagram}, we plot the overall phase diagram for $\bar{\rho}=9/16$ in the $t/V-\nu V/\omega_c$ plane. The $(t/V)-$axis represents the ratio of the kinetic energy to the strength of screened Coulomb repulsion. At fixed $\nu$, the ($\nu V/\omega_c)-$axis tunes LL mixing---the relative strength of Coulomb repulsion to statistical interactions. 

First consider a line cut at fixed Landau level mixing corresponding to $\nu V/\omega_c=0.005$. We obtain four phases as a function of $t/V$; these are the fractional quantum Hall state, two Hall crystals, denoted by HC$_a$ and HC$_b$, and a Wigner-Mott insulator. The WM insulator and HC$_b$ have super-unit cells of size $2\times 2$, while HC$_a$ is of the checkerboard type (with a super-unit cell of size $2\times 1$). These crystals are depicted in Fig.~\ref{fig:crystal_orders}(a)-(c). To characterize these phases, we compute the bare superfluid stiffness $\rho_s$ (defined explicitly in [\textcolor{blue}{SM}]) and structure factors $S(\bm{Q})=(1/4^2)\sum_{i,j}e^{\textrm{i}\mathbf{Q}\cdot (\bm{r}_i-\bm{r}_j)}\langle \rho_i\rangle\langle \rho_j\rangle$ at wave-vector $\mathbf{Q}$. The WM insulator and HC$_b$ exhibit a non-zero $S(\pi,0)$ while $S(\pi,\pi)$ is non-zero also in the checkerboard HC$_a$. In Fig.~\ref{fig:crystal_orders}(f), we plot these order parameters as a function of $t/V$. We find that the phases are separated by continuous phase transitions.

Next consider a line cut at \textit{weaker} Landau level mixing, say $\nu V/\omega_c=0.0033$. Here, we again obtain a similar phase diagram; however, the WM insulator at small kinetic energy is replaced by an anyon crystal. In this case, we do not obtain a WM insulator even at small $t/V$ (at least upto $t/V\gtrsim 10^{-6})$. This may be understood as due to the stronger statistical interactions as compared to the screened Coulomb repulsion.  Increasing $t/V$ melts the AC first to HC$_b$, which transitions to HC$_a$ and then melts into the fractional HS. 

The AC obtained has an ornate density and phase pattern (see Fig.~\ref{fig:crystal_orders}(d)). Each super-unit cell of the AC has two vortices (quasiholes) and two anti-vortices (quasiparticles), with vorticity $\pm 2\pi$ (charge $\mp 1/3$), so that the total vorticity (charge) is zero. The overall charge neutrality of the anyons is guaranteed because this AC 
is realized at an odd denominator LL filling fraction. The AC density has two inversion centers at the density extrema, and the anyons sit in plaquettes pinned to these sites, as expected from  flux attachment. From the saddle-point of the statistical interaction energy, one indeed finds that the local vorticity 
is $\propto \delta\rho(\bm{r})$.

The transition between the WM insulator and the AC is first order. Approaching from the AC, this transition may be understood as through spontaneous annihilation of vortex-anti-vortex pairs. We plot the order parameters as a function of $t/V$ in Fig.~\ref{fig:crystal_orders}(g) and (h). To characterize the AC, we compute an anyon structure factor---$S^{a}(-\pi/2,0)=(1/4^2)\sum_{\alpha,\beta}e^{\textrm{i}(-\pi/2,0)\cdot (\bm{r}_\alpha-\bm{r}_\beta)}\omega_\alpha\omega_\beta$, where $\alpha,\beta$ label the plaquettes. The loss in chemical pressure corresponding to the anyon density  realizes a jump in the anyon structure factor at the transition, as shown in Fig.~\ref{fig:crystal_orders}(g).

As shown in Fig.~\ref{fig:phase_diagram}, for sufficiently small $\nu V/\omega_c$, the $\textrm{HC}_b$ phase disappears (indicated by the critical point at $t/V\approx 42$, $\nu V/\omega_c\approx 0.001$). The approximately linear phase boundary in the log-log plot between HL and $\textrm{HC}_a$ in Fig.~\ref{fig:phase_diagram} suggests that the location of the transition is primarily controlled by the competition between $t$ and $\omega_c/\nu$, even though the checkerboard order of HC$_a$ is due to the nearest-neighbor interaction. Assuming that this transition survives quantum fluctuations, this transition is expected to lie in the Ising universality class. A similar argument can be made for the transition between $\textrm{HC}_a$ and $\textrm{HC}_b$.

\textit{Orbital currents}\:---\:Even if the mean-field crystalline ground-states are free of vortices, they break time-reversal symmetry. This can be directly observed from the presence of orbital currents $(\bm{j}_i)_d$ in the crystalline ground states. The gauge-invariant current on the lattice at the center of a bond assigned to site $i$ is $\bm{j}_i=(\bm{j}^p)_i+(\bm{j}^d)_i$, where the diamagnetic current may be written as
\begin{equation}
    (\bm{j}^d)_i=\frac{\phi_0}{2\pi\nu}\frac{1}{2}(\rho_i+\rho_{i+e_\alpha})\sum_\alpha \frac{1}{Q_\alpha^2 }\hat{z}\times \nabla(\rho_\alpha)_i,
\end{equation}
and $\nabla$ denotes the forward difference operator on the lattice [\textcolor{blue}{SM}]. This formula applies to crystals with Fourier weight over a finite set of wavevectors $\bm{Q}_\alpha$, such that their density can be written as $\rho_i = \bar{\rho} +\sum_\alpha(\rho_\alpha)_i$. To obtain this expression, we again used the flux attachment constraint (Eq.~\ref{eq:flux_attachment}). We note that the paramagnetic currents are only non-zero in the AC, and so the total current is equal to the diamagnetic contribution elsewhere. We plot the currents in the crystals in Fig.~\ref{fig:crystal_orders}(a)-(d). For $\textrm{HC}_a$ and $\textrm{HC}_b$, we find that the currents realize $d-$density wave and modulated stripe current order respectively~\cite{chakravarty2001hidden, nayak2000density} (see Fig.~\ref{fig:crystal_orders}(a) and (b)).

\textit{Discussion}\:---\:The quantum Hall regime admits a variety of effective descriptions for understanding its emergent behavior~\cite{laughlin1983anomalous, zhang1989effective, jain1989composite, halperin1984statistics}. In this work, we have adopted the composite boson framework because it provides a unified perspective on both quantum Hall states and Hall crystal phases, interpreting them as manifestations of superfluidity and supersolidity, respectively. Additionally, within the lattice formulation developed here, the composite boson approach highlights the role of geometric frustration, demonstrating how Hall crystal and anyon crystals emerge from the interplay between frustration arising from short-range repulsive and long-range statistical interactions.

The anyon crystal is found at weak Landau level mixing and at an average density of $\bar{\rho}=9/16$, that is, near half-filling. We also find it at $\bar{\rho}=7/16$, but not at fillings farther away from $1/2$. At $\bar{\rho}=9/16$ at not-too-weak Landau level mixing, we find a continuous transition between the WM insulator and HC$_b$ on tuning the ratio of electronic interactions to kinetic energy. Assuming that the continuous transition survives quantum fluctuations, it will lie in the Wilson-Fisher Chern-Simons universality class~\cite{zhang1992chern}. This universality class may be obtained by tuning the following Lagrangian to criticality~\cite{quantized_footnote}, 
\begin{equation}
    \mathcal{L}= |D^{a'}_\mu \varphi|^2 - \mathcal{W}(|\varphi|^2 ) +\frac{\nu}{4\pi}(A-a')\textrm{d}(A-a'),
\end{equation}
where $D^a_\mu = \partial_\mu -\textrm{i}a_\mu$, $a\textrm{d}b=\epsilon^{\mu\nu\lambda}a^\mu \partial^\nu b^\lambda$ and $\mathcal{W}$ denotes the usual Wilson-Fisher potential. Our work thus also provides a candidate realization of this universality class.

Our analysis shows that both screening the Coulomb interaction and introducing a periodic potential favor the formation of Hall crystal phases. While we discussed results on the square lattice, Hall crystals on the triangular lattice can also be realized. Previous work obtained a supersolid with $\sqrt{3}\times \sqrt{3}$ charge order at half-filling on the triangular lattice for hard-core bosons in the extended Bose-Hubbard model with nearest neighbor interactions~\cite{melko2005supersolid, heidarian2005persistent, wessel2005supersolid} (the $\omega_c\rightarrow 0$ limit of Eq.~\ref{eq:ham_CS}). This crystal was obtained for arbitrarily large interactions due to geometric frustration in the classical problem. At finite $\omega_c$, we find that this $\sqrt{3}\times \sqrt{3}$ supersolid evolves into a Hall crystal. In fact, for this crystal, the three-body and current-density interactions are absent at mean-field level owing to the $C_3$ symmetry of the crystalline state. Since several moir\'e platforms realize a triangular lattice~\cite{wu2018hubbard, tang2020simulation, regan2020mott, ghiotto2021quantum} and have screened interactions, this suggests that they may be favorable platforms for the realization of Hall crystals, and their zero-field counterparts~\cite{dong2024anomalous, soejima2024anomalous, tan2024parent, sheng2024quantum, dong2024stability, desrochers2026energetics}.

\textit{Acknowledgments}\:---\:We thank J.~Dong, I.~Esterlis, P.~Kumar, L. Radzihovsky, A.~Reddy,  S.~Sondhi, Z.~D.~Shi and J.~Yu for helpful discussions. We especially thank S.~Kivelson for several useful comments. SB and SR are supported in part by the US Department of Energy, Office of Basic Energy Sciences, Division
of Materials Sciences and Engineering, under Contract
No.~DE-AC02-76SF00515. JMM is supported by a Leinweber Institute for Theoretical Physics fellowship.

\bibliography{draft.bib}

\newpage
\onecolumngrid

\newpage

\section*{Supplementary Material}

\tableofcontents

\vspace{10pt} 
\appendix

The supplementary material is organized as follows. In Sec.~\ref{derivation_bosonHam}, we derive the boson Hamiltonian in the continuum. In Sec.~\ref{mean-field_calc} and \ref{lattice sums}, we provide details of the mean-field analysis of the boson Hamiltonian using the Gutzwiller ansatz. In Sec.~\ref{sf_stiness}, we provide an expression for the superfluid stiffness using the Gutzwiller approximation. In Sec.~\ref{flux_attachment}, we explain how flux attachment is enforced in the anyon crystal. In Sec.~\ref{orbital_currents}, we derive the expression for the orbital currents evaluated at the center of each bond on the lattice. Finally, in Sec.~\ref{additional_numerical_results}, we provide additional numerical results at other lattice and magnetic fillings. 

\section{Derivation of the boson Hamiltonian}\label{derivation_bosonHam}

In this appendix, we derive the boson Hamiltonian by rewriting the statistical gauge field in terms of the boson degrees of freedom using the flux attachment constraint. We shall perform this calculation in the continuum and finally write a ultraviolet regularized lattice Hamiltonian based on the continuum Hamiltonian. 

The kinetic part of the Hamiltonian is,
\begin{equation}
    \mathcal{H}_{\textrm{kin}}(\bm{r})= \frac{1}{2m}\phi^\dagger (\bm{r})\left(-\im  \bm{\nabla}-\bm{a}'(\bm{r})\right)^2\phi(\bm{r}).
\end{equation}
Expanding the square, we obtain,
\begin{equation}
    \mathcal{H}_{\textrm{kin}}(\bm{r})=\frac{1}{2m}\phi^\dagger(\bm{r})(-\im \bm{\nabla})^2\phi(\bm{r})+\frac{1}{2m}\rho(\bm{r})a'^2(\bm{r})-\frac{1}{2m}\bm{j}^p(\bm{r})\cdot \bm{a}'(\bm{r}),
\end{equation}
where 
\begin{equation}
    \bm{j}^p=-\im \left( \phi^\dagger \bm{\nabla} \phi-\phi\bm{\nabla}\phi^\dagger\right),
\end{equation}
is the paramagnetic part of the boson current. By the paramagnetic current, we refer to part of the gauge-invariant current that does not depend on the gauge field $\bm{a}'$.  

We now write the second and third terms in momentum space, whose Hamiltonian density in momentum space we denote by $\mathcal{H}_1(\bm{k})$. This is given by,
\begin{equation}\label{H1_momentum_density}
    \mathcal{H}_1(\bm{k})=\frac{1}{2m}\int_q a'_i(\bm{k})a'^i(\bm{q})\rho(-\bm{k}-\bm{q}) -\frac{1}{2m}j_i^p(\bm{k})a'^i(-\bm{k}),
\end{equation}
where $\int_q=\int \textrm{d}^2 \bm{q}/{(2\pi)^2}$ and Einstein summation convention over the index $i=x,y$ is assumed. In the Coulomb gauge, we have the relationship, 
\begin{equation}\label{Coulomb_gauge_relationship}
    \bm{f}(\bm{k})=\im \frac{\hat{z}\times \bm{k}}{k}f_T(\bm{k}) \: \implies f_i(\bm{k})=\im \frac{\epsilon_{ij}k_j}{k}f_T(\bm{k}),
\end{equation}
where $\bm{f}$ is a vector valued function, $\epsilon_{ij}$ is the Levi-Civita symbol in two dimensions and the subscript $T$ denotes the transverse direction. Using Eq.~\ref{Coulomb_gauge_relationship}, we can rewrite Eq.~\ref{H1_momentum_density} as,
\begin{equation}    \mathcal{H}_1(\bm{k})=-\frac{1}{2m}\int_q \frac{k_iq_i}{kq}a_T(\bm{k}) a_T(\bm{q})\rho(-\bm{k}-\bm{q}) -\frac{1}{2m}j_T^p(\bm{k})a'_T(-\bm{k}).
\end{equation}
Now we use the flux attachment constraint,
\begin{equation}
    a_T(\bm{k})=\Phi\nu^{-1}\frac{\delta \rho(\bm{k})}{k}.
\end{equation}
Then, Eq.~\ref{H1_momentum_density} is,
\begin{equation}
  \mathcal{H}_1(\bm{k})=-\frac{4\pi^2}{2m} \nu^{-2}\int_q \frac{k_iq_i}{k^2q^2}\delta\rho(\bm{k})\delta\rho(\bm{q})\rho(-\bm{k}-\bm{q}) - \frac{1}{2m}\nu^{-1}j_T^p(\bm{k})\delta\rho(-\bm{k})\frac{2\pi}{k}.
\end{equation}
Note that,
\begin{equation}
    \rho(\bm{r})=\bar{\rho}-\delta\rho(\bm{r}) \implies \rho(\bm{k})=(2\pi)^2\bar{\rho}\delta^2(\bm{k})-\delta\rho(\bm{k}).
\end{equation}
Then the Hamiltonian is,
\begin{equation}
  \mathcal{H}_1(\bm{k})=\frac{\omega_c}{2\nu} \delta\rho(\bm{k})\delta\rho(-\bm{k})\frac{2\pi}{k^2}+\frac{4\pi^2}{2m} \nu^{-2}\int_q\frac{k_iq_i}{k^2q^2}\delta\rho(\bm{k})\delta\rho(\bm{q})\delta\rho(-\bm{k}-\bm{q}) - \frac{1}{2m}\nu^{-1}j_T^p(\bm{k})\delta\rho(-\bm{k})\frac{2\pi}{k}  .  
\end{equation}
It is instructive to write the Hamiltonian in real space. Using the inverse relationship of Eq.~\ref{Coulomb_gauge_relationship}, that is,
 $f_T(\bm{k})=-\im \epsilon_{ij}k_jf_i(\bm{k})/k=-\im (     \hat{z}\times \bm{k})\cdot \bm{f}(\bm{k})/k$, we can rewrite the last term of $\mathcal{H}_1(\bm{k})$ as,
 \begin{equation}
    \mathcal{H}_1(\bm{k})\supset -\frac{1}{2m} \nu^{-1}j_T^p(\bm{k})\delta\rho(-\bm{k})\frac{2\pi}{k} =     \im \frac{1}{2m}\nu^{-1}\frac{(\hat{z}\times \bm{k})\cdot \bm{j}^p(\bm{k})}{k}\delta\rho(-\bm{k})\frac{2\pi}{k} .
 \end{equation}
Fourier transforming back to real space, and writing the equivalent square lattice Hamiltonian, we obtain Eq.~\ref{eq:ham_CS} of the main text.

\section{Mean-field calculation}\label{mean-field_calc}

In this appendix, we describe the details of the mean-field calculation performed to obtaine the ground-states of the composite boson Hamiltonian using the Gutzwiller variational wavefunction. 

As stated in the main text, we consider a $4\times 4$ unit cell labeled as follows,
\begin{align}
    \begin{matrix}
        A &B& C& D\\
        E &F&G&H\\
        I&J&K&L\\
        M&N&O&P
    \end{matrix}
\end{align}
This unit cell supports charge configurations with period 4, and accommodates commensurate fillings so that $\bar{\rho}=p/16$, where $p\in \mathbb{Z}^+$. In the main text, we only present results for $\bar{\rho}=9/16$. 

First, one can compute the energy of the short-ranged part of the Hamiltonian $H_{\textrm{eff}}$. For the interactions, we obtain,
\begin{equation}
    E_{\textrm{short-range-int}}=\frac{U}{2}\sum_\alpha \rho_\alpha(\rho_\alpha-1)+V\sum_{\langle \alpha\beta\rangle }\rho_\alpha\rho_\beta
\end{equation}
From the kinetic energy we obtain,
\begin{equation}
    E_{\textrm{kin-energy}}=-2t\sum_{\langle \alpha\beta\rangle}\textrm{Re}[\langle \varphi_\alpha^*\rangle \langle \varphi_\beta\rangle ]
\end{equation}
where
\begin{equation}
    \langle \varphi_i\rangle = \sum_m c_{i,m-1}^*c_{im}\sqrt{m}
\end{equation}
and the density is,
\begin{equation}\label{eq_density}
    \rho_\alpha= \mel{\Psi}{\phi^\dagger_\alpha \phi _\alpha}{\Psi}=\sum_{m=0}^M m |c_{\alpha m}|^2. 
\end{equation}

In what follows, we resolve the long-ranged interactions onto the $4\times 4$ unit cell. 

\subsection{Two-body interaction}\label{two-body-int}
Any two-body interaction can be written as,   \begin{equation}\label{2b_unitcell}
E_{\textrm{cell}}^{2b}
=
\frac{1}{2}
\sum_{\alpha,\beta}
\delta\rho_\alpha\, \mathcal V_{\alpha\beta}\, \delta\rho_\beta.
\end{equation}
where $\mathcal{V}_{\alpha\beta}$ is given by,
\begin{equation}
    \mathcal{V}_{\alpha\beta}=\sum_{\bm{R}\in 4\mathbb{Z}^2}'V(|\bm{r}_{\alpha\beta}+\bm{R}|)
\end{equation}
where the prime over the sum indicates that we avoid the term corresponding to $\alpha=\beta$ and $\bm{R}=0$ in the sum. $\alpha,\beta=A,B,C,\hdots, P$. For our calculation, the two-body interactions in the logarithmic interaction, $V(\bm{r})=-(\omega_c/\nu)\ln(r)$. 

We first note that the pairwise interactions $\mathcal V_{\alpha\beta}$ can be classified into 
6 inequivalent classes. To see this, note that we can classify the values of $\bm{r}_\alpha-\bm{r}_\beta$ based on their distances. The six inequivalent classes are the following sets,
\begin{enumerate}
    \item $r_{\alpha\beta}=0$: $\bm{r}_{\alpha\beta}=\{(0,0)\}$.
    \item $r_{\alpha\beta}=1$: $\bm{r}_{\alpha\beta} =\{(1,0), (0,1), (3,0), (0,3)\}$.
    \item $r_{\alpha\beta}=\sqrt{2}$: $\bm{r}_{\alpha\beta} =\{(1,1), (3,3), (1,3), (3,1)\}$.
    \item $r_{\alpha\beta}=\sqrt{5}$: $\bm{r}_{\alpha\beta}=\{(1,2), (2,1), (2,3), (3,2)\}$. 
        \item $r_{\alpha\beta}=2$: $\bm{r}_{\alpha\beta}=\{(2,0), (0,2)\}$. 
    \item $r_{\alpha\beta}=2\sqrt{2}$: $\bm{r}_{\alpha\beta} =\{(2,2)\}$.  
\end{enumerate}

We therefore define the following sums,
\begin{align}
   \nonumber  &S_1= \mathcal{V}_{\alpha\beta}\big|_{r_{\alpha\beta}=0}\;\;\; ; \;\;\;S_2= \mathcal{V}_{\alpha\beta}\big|_{r_{\alpha\beta}=1}\;\;\; ; \;\;\;S_3= \mathcal{V}_{\alpha\beta}\big|_{r_{\alpha\beta}=\sqrt{2}}\\
    &S_4= \mathcal{V}_{\alpha\beta}\big|_{r_{\alpha\beta}=\sqrt{5}}\;\;\; ; \;\;\; S_5= \mathcal{V}_{\alpha\beta}\big|_{r_{\alpha\beta}=2}\;\;\; ;\;\;\; S_6= \mathcal{V}_{\alpha\beta}\big|_{r_{\alpha\beta}=2\sqrt{2}}
\end{align}

Using the fact that $\sum_\alpha \delta \rho_\alpha=0$, we can now write Eq.~\ref{2b_unitcell} as,
\begin{equation}\label{new_defn_unit_cell_2b}
    E_{\textrm{cell}}^{2b}=\frac{1}{2}\sum_{\alpha,\beta}\delta\rho_\alpha \Delta_{\alpha\beta}\delta\rho_\beta
\end{equation}
where $\Delta_{\alpha\beta}=S_i - S_1=\Delta_i$ if $(\alpha,\beta)$ belongs to the $i$-th class $(i=0,1,2,\hdots,6)$, and $\Delta_{\alpha\alpha}=\Delta_1=0$. Next, note that,
\begin{equation}
\Delta_{\alpha\beta}=\Delta_{\beta\alpha}
\end{equation}
due to the inversion symmetry of the potential. Also, note that the row and columns sums
\begin{equation}
    \sum_{\alpha}\Delta_{\alpha\beta}=\sum_\beta \Delta_{\alpha\beta}=\Lambda_0 = 4\Delta_2+4\Delta_3+4\Delta_4+2\Delta_5+\Delta_6
\end{equation}
are a constant independent of the row or column due to translational invariance of the potential. We can then write Eq.~\ref{new_defn_unit_cell_2b} as,
\begin{align}
    \nonumber E_{\textrm{cell}}^{2b}&=\frac{1}{2}\sum_{\alpha,\beta}[\bar{\rho}-\rho_\alpha]\Delta_{\alpha\beta}[\bar{\rho}-\rho_\beta]=\frac{1}{2}\left[16\Lambda_0\bar{\rho}^2-2\bar{\rho}\Lambda_0\sum_\beta\rho_\beta+\sum_{\alpha,\beta}\rho_\alpha\Delta_{\alpha\beta}\rho_\beta\right]\\
    &=\frac{1}{2}\sum_{\alpha,\beta}\rho_\alpha\Delta_{\alpha\beta}\rho_\beta-8\Lambda_0\bar{\rho}^2
\end{align}
This is a finite energy. The finite sums ($\Delta$s) are evaluated using the Ewald summation method in Sec.~\ref{lattice sums}.

\subsection{Three-body interaction}
The energy of the unit cell for the three-body interaction can be written as,
\begin{equation}\label{three-body}
    E_{\textrm{cell}}^{3b}=-\frac{\omega_c}{\nu}\frac{1}{2\pi\bar{\rho}}\sum_\alpha \delta\rho_\alpha \bm{\mathcal{A}}_\alpha^2
\end{equation}
where $\alpha=A,B,\hdots,P$. We therefore need to compute the density-dependent potential $\bm{\mathcal{A}}_\alpha$ for each $\alpha$ in the $4\times 4$ unit cell. We shall compute the potential defined by,
\begin{equation}
   \tilde{ \bm{\mathcal{A}}}_\alpha 
=
\sum_\beta \left[\delta\rho_\beta\sum_{\bm R\in 4\mathbb{Z}^2}'
\frac{\bm r_{\alpha\beta}+\bm R}{|\bm r_{\alpha\beta}+\bm R|^2}\right]=\sum_\beta \delta\rho_\beta \bm{K}(\bm{r}_{\alpha\beta})
\end{equation}
Note that $\bm{\mathcal{A}}_\alpha^2=\tilde{\bm{\mathcal{A}}}_\alpha^2$, and so it suffices to compute $\tilde{\bm{\mathcal{A}}}_\alpha$. 
\begin{theorem}The kernel $\bm{K}(\bm{r}_{\alpha\beta})=0$ whenever $\bm{r}_{\alpha\beta}\in \{(0,0), (2,0), (0,2), (2,2)\}$. 
\end{theorem}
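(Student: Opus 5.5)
The proof uses a symmetry argument. The kernel
\[
\bm{K}(\bm{r})=\sum_{\bm R\in 4\mathbb{Z}^2}{}' \frac{\bm r+\bm R}{|\bm r+\bm R|^2}
\]
is a lattice sum of an odd vector function $\bm{g}_0(\bm{x})=\bm{x}/x^2$. We show that for each of the four listed displacements, the set of points $\{\bm r+\bm R:\bm R\in 4\mathbb{Z}^2\}$ (with $\bm 0$ removed when $\bm r=0$) is mapped to itself by inversion $\bm x\mapsto-\bm x$. Each $\bm r\in\{(0,0),(2,0),(0,2),(2,2)\}$ satisfies $2\bm r\in 4\mathbb{Z}^2$, so $-\bm r\equiv \bm r \pmod{4\mathbb{Z}^2}$. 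Explicitly, $-(\bm r+\bm R)=\bm r+\bm R'$ with $\bm R'=-\bm R-2\bm r\in4\mathbb{Z}^2$. The map $\bm R\mapsto\bm R'$ is an involution of $4\mathbb{Z}^2$. For $\bm r=0$ it fixes $\bm R=0$, which is exactly the term excluded by the prime, so the primed sum is still preserved. For the other three displacements no term vanishes and no exclusion is needed.

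Given this, terms pair off as $\bm g_0(\bm x)+\bm g_0(-\bm x)=0$. A point is its own partner only if $\bm x=-\bm x$, i.e.\ $\bm x=0$, which is excluded or absent. Hence $\bm K(\bm r)=-\bm K(\bm r)$, so $\bm K(\bm r)=0$.

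The main obstacle is that the lattice sum of $\bm x/x^2$ is only conditionally convergent, since the summand decays like $1/R$ in two dimensions. The rearrangement into pairs therefore has to be justified. I would handle this by defining $\bm K$ through a summation scheme that is itself inversion symmetric about $\bm r$: for instance, partial sums over $|\bm r+\bm R|\le L$ with $L\to\infty$, or equivalently the Ewald/Gaussian-regulated sum $\sum' \bm g_0(\bm x)e^{-\epsilon x^2}$ with $\epsilon\to0$. This is consistent with the Ewald treatment the paper uses for the other lattice sums. Each regulated partial sum is exactly zero by the pairing above, because the regulator depends only on $|\bm x|$ and the truncation region is invariant under $\bm x\mapsto-\bm x$. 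The limit is therefore zero.

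Finally, I would remark that this is the natural physical definition. Any other summation order differs from it by a shape-dependent constant, and the physical quantity $\tilde{\bm{\mathcal A}}_\alpha$ multiplies $\delta\rho_\beta$ with $\sum_\beta\delta\rho_\beta=0$, so such constants drop out anyway.
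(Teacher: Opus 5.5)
Your argument is the paper's: the involution $\bm{R}\mapsto-\bm{R}-2\bm{r}$ reproduces exactly its four index maps $(p,q)\mapsto(-p,-q)$, $(-p-1,-q)$, $(-p,-q-1)$, $(-p-1,-q-1)$, and your regularization symmetric about the field point adds the convergence justification that the paper omits when it simply asserts that the summation domain is invariant. Your closing remark is incorrect, however: changing the summation scheme shifts $\bm{K}(\bm{r})$ by a surface term linear in $\bm{r}$, not by a constant. For instance, square partial sums $|p|,|q|\le N$ leave the unpaired column $p=N$ for $\bm{r}=(2,0)$ and give $\bm{K}((2,0))\to(\pi/8,0)=\tfrac{\pi}{16}\bm{r}$. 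This feeds a term $\propto\sum_\beta\delta\rho_\beta\,\bm{r}_{\alpha\beta}$ into $\tilde{\bm{\mathcal{A}}}_\alpha$, which $\sum_\beta\delta\rho_\beta=0$ does not cancel. So the proposition depends on the inversion-symmetric (Ewald-consistent) convention you chose; it is not convention-independent, and that is why your regularization step is needed.
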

\noindent \textit{Proof}. In component form, we can write,
\begin{equation}\label{explicit_eq}
    \bm{K}((m,n))=\sum_{p,q\in \mathbb{Z}}'\frac{(m+4p, n+4q)}{(m+4p)^2+(n+4q)^2}
\end{equation}
where the prime indicates that we skip the term in the sum where $(m,n)=(0,0)$ when $(p,q)=(0,0)$. Below, we will list transformations that the summand is odd under while the summation domain remains invariant. Then, upon such a transformation, all terms cancel pairwise and $\bm{K}((m,n))=0$.
\begin{enumerate}
\item For $(m,n)=(0,0)$, map $(p,q)\mapsto(-p,-q)$. 
\item For $(m,n)=(2,0)$, map $(p,q)\mapsto (-p-1, -q)$. 
\item For $(m,n)=(0,2)$, map $(p,q)\mapsto (-p, -q-1)$.
\item For $(m,n)=(2,2)$, map $(p,q)\mapsto (-p-1,-q-1)$. 
\end{enumerate}
\hfill $\square$

\begin{theorem}For the remaining $\bm{r}_{\mu\nu}$ in the set of  distance classes, the kernel depends on the following four lattice sums,
\begin{align}\label{three-body_sums} \alpha=\sum_{p,q\in\mathbb{Z}}\frac{1+4p}{(1+4p)^2+(4q)^2}\;\;\; ; \;\;\; \beta=\sum_{p,q\in\mathbb{Z}}\frac{1+4p}{(1+4p)^2+(1+4q)^2}\;\;\; ; \;\;\; \gamma=\sum_{p,q\in\mathbb{Z}}\frac{1+4p}{(1+4p)^2+(2+4q)^2}
\end{align}
Specifically, 
\begin{align}
   &\bm{K}((1,0))=(\alpha,0)\;\;\; ; \;\;\;\bm{K}((0,1))=(0,\alpha) \;\;\; ; \;\;\; \bm{K}((3,0))=(-\alpha,0)\;\;\; ; \;\;\;\bm{K}((0,3))=(0,-\alpha).\\
    &\bm{K}((1,1))=(\beta,\beta)\;\;\; ; \;\;\;\bm{K}((3,3))=(-\beta,-\beta) \;\;\; ; \;\;\; \bm{K}((1,3))=(\beta,-\beta)\;\;\; ; \;\;\;\bm{K}((3,1))=(-\beta,\beta).\\
    &\bm{K}((1,2))=(\gamma,0)\;\;\; ; \;\;\;\bm{K}((2,1))=(0,\gamma) \;\;\; ; \;\;\; \bm{K}((3,2))=(-\gamma,0)\;\;\; ; \;\;\;\bm{K}((2,3))=(0,-\gamma).
\end{align}
\end{theorem}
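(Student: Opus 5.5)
The plan is to treat $\bm{K}((m,n))$ as a sum over the shifted lattice $\Lambda_{m,n}=(m+4\mathbb{Z})\times(n+4\mathbb{Z})$. Each point $\bm{x}\in\Lambda_{m,n}$ contributes $\bm{x}/|\bm{x}|^2$, with $\bm{x}=0$ omitted. All identities then come from point-symmetry maps of the plane that send one shifted lattice onto another and transform the summand covariantly. This is the same pairing idea as in the preceding Proposition.

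\textbf{Convergence convention.} The sums are only conditionally convergent, so first I fix the convention: the limit of partial sums over the disk $|\bm{x}|\le L$ as $L\to\infty$. This is the regularization the Ewald evaluation of Sec.~\ref{lattice sums} reproduces. Every map used below is an isometry fixing the origin (a reflection $x\mapsto -x$ or $y\mapsto -y$, or the swap $x\leftrightarrow y$), so it preserves each disk. Hence the reindexings are legitimate at every finite $L$ and survive the limit. I expect this to be the main obstacle. A reindexing like $p\mapsto -p-1$ written in $(p,q)$ variables does not preserve a symmetric box in $p$, and boundary rows could contribute $O(1)$. Phrasing everything via isometries of the displacement $\bm{x}$ avoids this.

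\textbf{Reflections: signs and vanishing components.} The reflection $x\mapsto -x$ maps $\Lambda_{m,n}$ onto $\Lambda_{-m,n}=\Lambda_{4-m,n}$; in $(p,q)$ language this is $p\mapsto -p-1$ when $m\in\{1,3\}$ and $p\mapsto -p$ when $m=0$. It flips the sign of the $x$-component and leaves the $y$-component unchanged. So $K_x((3,n))=-K_x((1,n))$, and $K_x((0,n))=0$ since $\Lambda_{0,n}$ is invariant under this reflection. The same argument with $y\mapsto -y$ ($q\mapsto -q$ or $q\mapsto -q-1$) gives $K_y((m,3))=-K_y((m,1))$. It also gives $K_y((m,n))=0$ for $n\in\{0,2\}$, because $\Lambda_{m,0}$ and $\Lambda_{m,2}$ are invariant under $y\mapsto -y$. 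The same reflection gives $K_x((m,3))=K_x((m,1))$ and $K_x((m,2))$ unchanged. These facts immediately give the zero components of $\bm{K}((1,0))$, $\bm{K}((3,0))$, $\bm{K}((1,2))$, $\bm{K}((3,2))$ and their transposes. They also fix all the sign patterns in the statement, e.g. $\bm{K}((1,3))=(K_x((1,1)),-K_y((1,1)))$.

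\textbf{Swap and identification with $\alpha,\beta,\gamma$.} The swap $x\leftrightarrow y$ maps $\Lambda_{m,n}$ to $\Lambda_{n,m}$ and exchanges the components, so $\bm{K}((n,m))$ is the transpose of $\bm{K}((m,n))$. Applied to $\Lambda_{1,1}$, which is swap-invariant, it gives $K_y((1,1))=K_x((1,1))$. It also transfers the results for $(1,0)$, $(3,0)$, $(1,2)$, $(3,2)$ to $(0,1)$, $(0,3)$, $(2,1)$, $(2,3)$. Finally, reading off the surviving components from Eq.~\ref{explicit_eq} with $(m,n)=(1,0),(1,1),(1,2)$ gives literally the sums $\alpha,\beta,\gamma$ of Eq.~\ref{three-body_sums}. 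There is no omitted term there, since $\bm{x}\neq 0$ on these lattices.
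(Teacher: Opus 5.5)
Your proof is correct and is essentially the paper's argument. The paper likewise reads off $\alpha,\beta,\gamma$ by direct substitution and gets the remaining entries from $p\mapsto -p-1$, $q\mapsto -q-1$ and $p\leftrightarrow q$, which are your reflections and swap written in lattice indices. Your added step of fixing disk partial sums in the displacement $\bm{x}$ is a legitimate refinement the paper skips. It is needed: with symmetric boxes in $(p,q)$, the boundary rows make $K_x((1,0))+K_x((3,0))$ tend to $\pi/4$ rather than $0$.
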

\noindent \textit{Proof}. Most of these follow from direct substitution in Eq.~\ref{explicit_eq}. Otherwise, one may perform a transformation such as $p\mapsto -p-1$ (or $q\mapsto -q-1$) and if needed interchange $p$ and $q$ indices to arrive at the rest. 
\hfill $\square$

The three sums $\alpha,\beta$ and $\gamma$ are computed in Sec.~\ref{lattice sums}. We can now explicitly write down $\tilde{\bm{\mathcal{A}}}_\alpha$ for each sublattice index. For instance,
\begin{align}
\nonumber \tilde{\bm{\mathcal{A}}}_A=
&
(\alpha,0)(\delta\rho_B-\delta\rho_D)
+(0,\alpha)(\delta\rho_E-\delta\rho_M)+
(\beta,\beta)(\delta\rho_F-\delta\rho_P)
+(\beta,-\beta)(\delta\rho_N
-\delta\rho_H)+\\
&
(\gamma,0)(\delta\rho_J-\delta\rho_L)
+(0,\gamma)(\delta\rho_G-\delta\rho_O)
\end{align}
This can be simplified as,
\begin{align}
    \tilde{\mathcal{A}}_A^x&= \alpha(\rho_D-\rho_B)+\beta(\rho_P-\rho_F+\rho_H-\rho_N)+\gamma(\rho_L-\rho_J)\\
    \tilde{\mathcal{A}}_A^y&=\alpha(\rho_M-\rho_E)+\beta(\rho_P-\rho_F-\rho_H+\rho_N)+\gamma(\rho_O-\rho_G)
\end{align}

The remaining $\tilde{\bm{\mathcal{A}}}_\alpha$s can be obtained by translation. Now,
\begin{equation}
        E_{\textrm{cell}}^{3b}=-g_2\sum_\alpha [\bar{\rho}-\rho_\alpha] \tilde{\bm{\mathcal{A}}}_\alpha^2=-g_2\bar{\rho}\sum_\alpha \tilde{\bm{\mathcal{A}}}_\alpha^2+g_2\sum_\alpha \rho_\alpha \tilde{\bm{\mathcal{A}}}_\alpha^2
\end{equation}
We compute these energies using Eq.~\ref{eq_density}.

\subsection{Current-density interaction}
The energy of the unit cell for the current-density interaction can be written as,
\begin{equation}\label{three-body}
    E_{\textrm{cell}}^{\textrm{cur-den}}=-\frac{\omega_c}{2\pi\bar{\rho}}\sum_\mu \bm{J}_\mu \cdot \bm{\mathcal{A}}_\mu
\end{equation}
where $\mu\in A,B,\hdots, P$. In the previous subsection, we computed $\tilde{\bm{\mathcal{A}}}_\mu$. Since $\bm{\mathcal{A}}_\mu=\hat{z}\times \tilde{\bm{\mathcal{A}}}_\mu$, we can write,
\begin{equation}
    \bm{\mathcal{A}}_\mu= \sum_\nu \delta\rho_\nu \bm{G}(\bm{r}_{\mu\nu})
\end{equation}
where $\bm{G}(\bm{r}_{\mu\nu})=\hat{z}\times \bm{K}(\bm{r}_{\mu\nu}).$ From the previous section we obtain,
\begin{align}
\nonumber \bm{\mathcal{A}}_A=
&
(0,\alpha)(\delta\rho_B-\delta\rho_D)
+(-\alpha,0)(\delta\rho_E-\delta\rho_M)+
(-\beta,\beta)(\delta\rho_F-\delta\rho_P)
+(\beta,\beta)(\delta\rho_N
-\delta\rho_H)+\\
&
(0,\gamma)(\delta\rho_J-\delta\rho_L)
+(-\gamma,0)(\delta\rho_G-\delta\rho_O)
\end{align}
which implies,
\begin{align}
    \mathcal{A}_A^x&= -\alpha(\rho_M-\rho_E)-\beta(\rho_P-\rho_F+\rho_N-\rho_H)-\gamma(\rho_O-\rho_G)\\
    \mathcal{A}_A^y&=\alpha(\rho_D-\rho_B)+\beta(\rho_P-\rho_F+\rho_H-\rho_N)+\gamma(\rho_L-\rho_J)
\end{align}
Again, the remaining $\bm{\mathcal{A}}_\alpha$s can be obtained by translation. 

Note that,
\begin{equation}
    J_\mu^\alpha =    \langle j_i^\alpha\rangle + \langle j_{i-e_\alpha}^\alpha\rangle 
\end{equation}
which using the Gutzwiller ansatz can be computed more explicitly. We find,
\begin{align}
\langle j_i^\alpha\rangle & = -\textrm{i}\left[\langle \varphi_i^\dagger\rangle \langle \varphi_{i+e_\alpha}\rangle -\langle \varphi^\dagger_{i+e_\alpha}\rangle \langle \varphi_i\rangle\right]= 2\textrm{Im}[\langle \varphi_i\rangle^*\langle \varphi_{i+e_\alpha}\rangle]
\end{align}
Therefore,
\begin{equation}
\langle J_i^\alpha\rangle = 2\textrm{Im}\left[\langle \varphi_i\rangle^*\langle \varphi_{i+e_\alpha}\rangle+\langle \varphi_{i-e_\alpha}\rangle^*\langle \varphi_i\rangle\right]
\end{equation}
This energy can also be minimized now using the sums in Eq.~\ref{three-body_sums}.

\section{Lattice sums}\label{lattice sums}

For the lattice sums involved in the previous section, we employ an Ewald summation to obtain finite answers. The idea of Ewald summation is to replace the total energy due to a long-range interaction on a lattice by the sum of a short-range part that can be summed efficiently in real space and a long-range part that can be summed efficiently in Fourier space. 

\subsection{Logarithmic interaction}\label{log_int_sums}
For the two-body logarithmic interaction, we wish to compute $\Delta_i$ $(i=1,2,\hdots, 6)$. These are defined as,
\begin{equation}
    \Delta_i  = S_i-S_1
\end{equation}
Let us denote $\Delta_i=\Delta(\bm{r})=S(\bm{r})-S_1$. Using the definitions in Sec.~\ref{two-body-int}, we find that, 
\begin{align}
    \Delta(\bm{r})&=-g_1\sum_{\bm{R}\in 4\mathbb{Z}^2}'\ln |\bm{r}+\bm{R}|+g_1\sum_{\bm{R}\in 4\mathbb{Z}^2}'\ln R\\
    =&-g_1\ln r - g_1\sum_{\bm{R}\in 4\mathbb{Z}^2\backslash \{0\}}\left[\ln |\bm{r}+\bm{R}|-\ln R\right]
\end{align}
with the understanding that $\Delta(0)=0$ and $g_1=\omega_c/\nu$. We set $g_1=1$ ahead, and will reintroduce it at the end. 

We now use the identity,
\begin{equation}
    \ln \frac{b}{a}= \frac{1}{2}\int_0^\infty \textrm{d} u\: \frac{e^{-ua^2}-e^{-ub^2}}{u}
\end{equation}
for $a,b>0$. Then, 
\begin{align}
    \Delta(\bm{r})&=\frac{1}{2}\int_0^\infty \textrm{d}u\:  \frac{e^{-ur^2}-e^{-u}}{u}- \frac{1}{2}\int_0^{\infty} \textrm{d}u\:\sum_{\bm{R}\in 4\mathbb{Z}^2\backslash \{0\}}\frac{e^{-u R^2}-e^{-u|\bm{r}+\bm{R}|^2}}{u}\\
    &=\frac{1}{2}\int_0^\infty \frac{\textrm{d} u}{u}\left[\sum_{\bm{R}\in 4\mathbb{Z}^2}(e^{-u|\bm{r}+\bm{R}|^2}-e^{-uR^2})+(1-e^{-u})\right]
\end{align}
We can now split the integral into a short-range and long-range part, denoted by $\Delta_{\textrm{sr}}(\bm{r})$ and $\Delta_{\textrm{lr}}(\bm{r})$. We introduce a splitting parameter $\Lambda$ and write $\Delta(\bm{r})$ as,
\begin{align}
    \Delta(\bm{r})&=\Delta_{\textrm{sr}}(\bm{r})+\Delta_{\textrm{lr}}(\bm{r})\\
    \Delta_{\textrm{lr}}(\bm{r})&=\frac{1}{2}\int_0^{\Lambda} \frac{\textrm{d} u}{u}\left[\sum_{\bm{R}\in 4\mathbb{Z}^2}(e^{-u|\bm{r}+\bm{R}|^2}-e^{-uR^2})+(1-e^{-u})\right]\\
    \Delta_{\textrm{sr}}(\bm{r})&=\frac{1}{2}\int_{\Lambda}^{\infty} \frac{\textrm{d} u}{u}\left[\sum_{\bm{R}\in 4\mathbb{Z}^2}(e^{-u|\bm{r}+\bm{R}|^2}-e^{-uR^2})+(1-e^{-u})\right]
\end{align}
The short ranged part is simplified in real space using the exponential integral defined as,
\begin{equation}
    E_1(z)=\int_z^\infty \textrm{d}t\: \frac{e^{-t}}{t}\;\;\; \textrm{for}\;\;\; |\textrm{arg}(z)|\leq \pi
\end{equation}
Then, we find that,
\begin{equation}\label{short_range}
    \Delta_{\textrm{sr}}(\bm{r})=\frac{1}{2}\left[E_1(\Lambda r^2)-E_1(\Lambda)+\sum_{\bm{R}\in 4\mathbb{Z}^2\backslash \{0\}}\left[E_1(\Lambda|\bm{r}+\bm{R}|^2)-E_1(\Lambda R^2)\right]\right]
\end{equation}
This can be computed efficiently numerically now. 

To compute the long-ranged part, we go to Fourier space using the Poisson resummation formula. The quantity to sum in momentum space is given by,
\begin{equation}\label{sigma_u_expression}
\Sigma_u(\bm{r})=\sum_{\bm{R}\in 4\mathbb{Z}^2}e^{-u|\bm{r}+\bm{R}|^2}
\end{equation}
Using the Poisson resummation formula, this can be written as,
\begin{equation}\label{poisson_sigma}
    \Sigma_u(\bm{r})=\frac{\pi}{16u}\sum_{\bm{K}}e^{-\frac{K^2}{4u}}e^{\textrm{i}\bm{K}\cdot \bm{r}}
\end{equation}
where $\bm{K}=\frac{\pi}{2}(m,n)$ and $m,n\in \mathbb{Z}$. Then,
\begin{align}
    \Delta_{\textrm{lr}}(\bm{r})&=\frac{1}{2}\int_{0}^{\Lambda} \frac{\textrm{d} u}{u}\left[\Sigma_u (\bm{r})-\Sigma_u (0)+(1-e^{-u})\right]\\
  &=\frac{1}{2}\int_0^{\Lambda}\textrm{d}u\: \frac{1-e^{-u}}{u}+\frac{\pi}{32}\sum_{\bm{K}\neq 0}(e^{-\textrm{i}\bm{K}\cdot \bm{r}}-1)\int_0^{\Lambda}\textrm{d}u \: u^{-2}e^{-\frac{K^2}{4u}}
\end{align}
Both of the integrals in $u$ can be done analytically now, and we obtain,
\begin{equation}\label{long_range}
        \Delta_{\textrm{lr}}(\bm{r})=\frac{1}{2}\ln \Lambda+\frac{\gamma}{2}+\frac{1}{2}E_1(\Lambda)+\frac{\pi}{8}\sum_{\bm{K}\neq 0} \frac{e^{-\frac{K^2}{4\Lambda}}}{K^2}(e^{\textrm{i}\bm{K}\cdot \bm{r}}-1)
\end{equation}
where $\gamma$ is the Euler-Mascheroni constant. 

We now compute $\Delta(\bm{r})$ for the displacements we are interested in, that is $\bm{r}=\{(0,0), (1,0),  (1,1), (1,2), (2,0), (2,2)\}$ using the sum of Eqs.~\ref{short_range} and ~\ref{long_range}, that is,
\begin{equation}\label{log_ewald}
    \Delta(\bm{r})=\frac{1}{2}\left[E_1(\Lambda r^2)+\sum_{\bm{R}\in 4\mathbb{Z}^2\backslash \{0\}}\left[E_1(\Lambda|\bm{r}+\bm{R}|^2)-E_1(\Lambda R^2)\right]\right]+\frac{1}{2}\ln \Lambda+\frac{\gamma}{2}+\frac{\pi}{8}\sum_{\bm{K}\neq 0} \frac{e^{-\frac{K^2}{4\Lambda}}}{K^2}(e^{\textrm{i}\bm{K}\cdot \bm{r}}-1)
\end{equation}
We find, inserting the factor of $g_1$,
\begin{align}
    \nonumber &\Delta_2/g_1 = 0.10126..\;\;\; ; \;\;\; \Delta_3/g_1= -0.162405..\;\;\; ; \;\;\;\Delta_4/g_1 = -0.339427..\\
    &\Delta_5/g_1= -0.249048..\;\;\;; \;\;\; \Delta_6/g_1= -0.422335.. 
\end{align}

\subsection{Three-body interaction}
In the three-body interaction, we are interested to compute,
\begin{equation}
\alpha=\sum_{p,q\in\mathbb{Z}}\frac{1+4p}{(1+4p)^2+(4q)^2}\;\;\; ; \;\;\; \beta=\sum_{p,q\in\mathbb{Z}}\frac{1+4p}{(1+4p)^2+(1+4q)^2}\;\;\; ; \;\;\; \gamma=\sum_{p,q\in\mathbb{Z}}\frac{1+4p}{(1+4p)^2+(2+4q)^2}
\end{equation}

To compute this, we will introduce the function,
\begin{equation}
    \bm{\Gamma}(\bm{r})=-\bm{\nabla}_r\Delta(\bm{r})=\sum_{\bm{R}\in 4\mathbb{Z}^2}'\frac{\bm{r}+\bm{R}}{|\bm{r}+\bm{R}|^2}
\end{equation}
Then,
\begin{equation}
\alpha = \Gamma_x((1,0))\;\;\; ; \;\;\; \beta = \Gamma_x((1,1))\;\;\; ; \;\;\; \gamma = \Gamma_x((1,2))
\end{equation}
We can now use Eq.~\ref{log_ewald} to compute $\bm{\Gamma}(\bm{r})$. Then,
\begin{align}
\bm{\Gamma}(\bm{r})&=\frac{1}{2}\left[-\bm{\nabla}_rE_1(\Lambda r^2)+\sum_{\bm{R}\in 4\mathbb{Z}^2\backslash \{0\}}\left[-\bm{\nabla}_rE_1(\Lambda|\bm{r}+\bm{R}|^2)\right]\right]+\frac{\pi}{8}\sum_{\bm{K}\neq 0} \frac{e^{-\frac{K^2}{4\Lambda}}}{K^2}\left(-\bm{\nabla}_re^{\textrm{i}\bm{K}\cdot \bm{r}}\right)\\
&=e^{-\Lambda r^2}\frac{\bm{r}}{r^2}+\sum_{\bm{R}\in 4\mathbb{Z}^2\backslash \{0\}}e^{-\Lambda|\bm{r}+\bm{R}|^2}\frac{\bm{r}+\bm{R}}{|\bm{r}+\bm{R}|^2}+\frac{\pi}{8}\sum_{\bm{K}\neq 0}\frac{e^{-\frac{K^2}{4\Lambda}}}{K^2}\bm{K}\sin(\bm{K}\cdot \bm{r})
\end{align}
Then, we compute these sums and find,
\begin{equation}
    \alpha =0.791276..\;\;\;  ;\;\;\; \beta = 0.327757..\;\;\; ; \;\;\; \gamma = 0.135761..
\end{equation}

\section{Superfluid stiffness}\label{sf_stiness}
In this appendix, we provide additional details of the computation of the superfluid stiffness using the Gutzwiller ansatz wavefunction. 

The superfluid stiffness is given using the formula,
\begin{equation}\label{phase_stiffness}
    \rho_{s,\alpha} = \frac{\partial^2\mathcal{E}(\theta_\alpha)}{\partial\theta_\alpha^2}\Bigg|_{\theta_\alpha=0},
\end{equation}
where $\mathcal{E}(\theta_\alpha)$ is the energy density given a uniform phase twist $\theta_\alpha$ along the $\alpha$-direction $(\alpha=x,y)$. As an illustration, we compute $\rho_{s,x}$. $\rho_{s,y}$ can be computed similarly. 

The phase-twisted Hamiltonian is,
\begin{equation}
    H=-\frac{t}{V}\sum_{\langle ij\rangle_x}[\varphi_i^\dagger \varphi_{j}e^{\im \theta}+\varphi_j^\dagger \varphi_i e^{-\im\theta}]
\end{equation} 
where $\langle ij\rangle_x$ denote the bonds along the $x$-direction and $\dots$ denote the terms in the Hamiltonian that do not depend on $\theta$. Naively, there is an additional contribution coming from the currents in the current-density interaction. However, this is an artifact of the lattice definition of the currents, and are hence ignored. 

Using the Gutzwiller wavefunction (with complex coefficients), we next compute the energy $E(\theta)$, given by,
\begin{centeq}
    E(\theta)=\mel{\Psi}{H}{\Psi} &=-\frac{t}{V}\sum_{\langle ij\rangle_x}\left[\langle \varphi_i^\dagger \rangle \langle \varphi_{j}\rangle  e^{\im\theta}+\langle \varphi_j^\dagger \rangle \langle \varphi_i\rangle e^{-\im\theta}\right] 
\end{centeq}
Now we use $\langle \varphi\rangle=|\varphi| e^{\im \vartheta}$, where both of these variables can be function of $\theta$. Then,
\begin{equation}
    E(\theta)=    -2\frac{t}{V}\sum_{\langle ij\rangle_x}|\varphi_i| |\varphi_j|\cos(\vartheta_j-\vartheta_i +\theta) 
\end{equation}
We shall now compute the bare stiffness, where we assume that $|\varphi|$ and $\vartheta$ do not depend on $\theta$. In that case, 
\begin{equation}
    \rho_{s,x}=\frac{1}{L^2}\left[2\frac{t}{V}\sum_{\langle ij\rangle_x}|\varphi_i| |\varphi_j|\cos(\vartheta_j-\vartheta_i)\right]
\end{equation}
where $L$ is the size of the system. Given the ground state values of the coefficients in the Gutzwiller ansatz, one can evaluate the stiffness using the above formula.

\section{Flux attachment in the anyon crystal}\label{flux_attachment}

In this appendix, we provide a partial explanation for the anyon crystal density and phase pattern obtained at weak LL mixing and small $t/V$ at $\bar{\rho}=9/16$. 

The AC is obtained at small $t/V$ and $\nu V/\omega_c$, and is thus a result of statistical interactions. It can thus be interpreted as the ground state of $H_{\textrm{st-int}}$, that is,
\begin{equation}
    H_{\textrm{st-int}}=-\frac{\omega_c}{\nu}\bigg[\sum_{i<j}\ln(r_{ij})\delta \rho_i\delta \rho_j+\frac{1}{2\pi\bar{\rho}}\sum_{i}\delta \rho_i (\bm{\mathcal{A}}_i)^2+ \frac{\nu}{2\pi\bar{\rho}}\sum_{i}\bm{J}_i \cdot \bm{\mathcal{A}}_i\bigg]
\end{equation}
Furthermore, within the mean-field approximation, the AC is obtained only for crystals that do not have inversion symmetry about every site in the unit cell. This can be a result of the three-body interaction, which favors anisotropic density patterns. The density of the anyon crystal can be parametrized by six parameters $A-F$, denoted in real space as,
\begin{equation}\label{density_ac}
   \langle \rho_i\rangle  =\begin{bmatrix}
        A & B & C & B\\
        B & D& E & D\\
        C & E& F& E\\
        B& D& E& D
    \end{bmatrix}_{mn\:\equiv\: \textrm{site}\: i}
\end{equation}
For small $t/V$, we find $A=C=1.0$ and $F=0.0$, and more generally $A\geq C> B> D> E> F$.

While the actual density pattern is the result of a complicated minimization problem, the fact that the density is parameterized by six parameters can be understood as follows. In the $4\times 4$ unit cell, upon choosing a site, say with density $A$, as the origin, the other sites can be grouped into classes as per their displacement from the origin, upto rotations and reflections of the square lattice. These form the same equivalence classes as those described in Sec.~\ref{two-body-int}. Assigning the same density to sites in the same class gives a six-parameter density pattern of the kind in Eq.~\ref{density_ac}.

The locations of the anyons are a result of flux attachment. Assuming that the densities follow the order $A\geq C> B> D> E> F$, and we find that the density is maximum and minimum at the density inversion centers $A$ and $F$ respectively. We will now show that due to flux attachment, the vortices are obtained at the sites of density extrema, which is suggested by the relation $\omega(\bm{r})\propto \delta \rho(\bm{r})$.  

This is most easily seen in the continuum. The kinetic energy of the Hamiltonian is,
\begin{equation}
    \mathcal{H}_{\textrm{kin}}(\bm{r})=\frac{1}{2m}\phi^\dagger(\bm{r})(-\im \bm{\nabla})^2\phi(\bm{r})+\frac{1}{2m}\rho(\bm{r})a'^2(\bm{r})-\frac{1}{2m}\bm{j}^p(\bm{r})\cdot \bm{a}'(\bm{r})
\end{equation}
Using $\phi = \sqrt{\rho}e^{\textrm{i}\theta}$, we can evaluate the phase-dependent part of the kinetic energy, which is given by,
\begin{equation}
      \mathcal{H}_{\textrm{kin}}(\bm{r})(\theta)=\frac{\rho}{2m}(\bm{\nabla}\theta)^2+\frac{\rho}{2m}\bm{a}'^2-\frac{\rho}{m}\bm{\nabla}\theta\cdot \bm{a}'=\frac{\rho}{2m}(\bm{\nabla}\theta-\bm{a}')^2
\end{equation}
Evaluating the saddle point, we obtain,
\begin{equation}
    \bm{\nabla}\theta=\bm{a}'
\end{equation}
Now, taking the curl on both sides, we obtain,
\begin{equation}
    \omega(\bm{r})=\bm{\nabla}\times \bm{\nabla}\theta(\bm{r}) =\bm{\nabla}\times \bm{a}'=\phi_0\nu^{-1}\delta\rho(\bm{r}) 
\end{equation}
where in the last equality we used the flux attachment constraint. This can be equivalently formulated on the lattice.

\section{Orbital currents}\label{orbital_currents}
In this appendix, we obtain an expression for the orbital currents in the HC and WC phases. The diamagnetic current is given by,
\begin{equation}
        \bm{j}^d(\bm{r})=-\bm{a}'(\bm{r})\rho(\bm{r})
\end{equation}
Rewriting the gauge potential using the density via the flux attachment constraint, we obtain, 
\begin{equation}
    (\bm{j}^d)_i(\bm{r})=-\frac{\phi_0}{2\pi \nu }\rho(\bm{r})\mathcal{\bm{A}}(\bm{r})
\end{equation}
where $\mathcal{\bm{A}}(\bm{r})= \int_r g(\bm{r}-\bm{r}')\delta\rho(\bm{r}')$ and $g(\bm{r})$ is the Biot-Savart kernel. On the lattice, this can be written as,
\begin{equation}
    (j_i^\alpha)^d=-\frac{\phi_0}{2\pi\nu}\rho_i \mathcal{A}^\alpha_i.
\end{equation}
Now, clearly for charge order that is inversion symmetric about every site, $\mathcal{\bm{A}}=0$ at the lattice sites. 

However, it not zero at the bonds of the square lattice. In particular, we shall evaluate the orbital current at the midpoint of the bonds. To do this, we first replace $\rho_i\mapsto \frac{1}{2}(\rho_i+\rho_{i+e_\alpha})$. Furthermore, we shall obtain an alternate expression for the gauge potential $\mathcal{A}(\bm{r})$, which may be interpreted as being evaluated at the center of the bond. 

Assume that the crystal has Fourier weight only in a finite number of modes. Then, we can write,
\begin{equation}
    \rho(\bm{r})=\bar{\rho}+\sum_\alpha \rho_{\bm{Q}_\alpha}\cos(\bm{Q}_\alpha\cdot \bm{r}+\vphi_\alpha)
\end{equation}
where $\bm{Q}_\alpha$ is the wavevector of the crystal. We can set $\vphi_\alpha=0$ if the crystal has an inversion center at some origin. This will be true for all the crystals we consider. Then,
\begin{equation}
    \mathcal{\bm{A}}(\bm{r})=-\sum_\alpha\rho_{\bm{Q}_\alpha}\int_{r'} g(\bm{r}-\bm{r}') \cos(\bm{Q}_\alpha\cdot \bm{r}')=\sum_\alpha \rho_{\bm{Q}_\alpha}\left[\frac{\hat{z}\times \bm{Q}_\alpha}{Q_\alpha^2}\sin(\bm{Q}_\alpha\cdot \bm{r})\right]
\end{equation}
This can be further written as,
\begin{equation}
    \mathcal{\bm{A}}(\bm{r})=\sum_\alpha \rho_{\bm{Q}_\alpha}\left[\frac{\hat{z}\times \bm{Q}_\alpha}{Q_\alpha^2}\sin(\bm{Q}_\alpha\cdot \bm{r})\right]=-\sum_\alpha \frac{\rho_{\bm{Q}_\alpha}}{Q_\alpha^2}\hat{z}\times \nabla \left[\cos(\bm{Q}_\alpha\cdot \bm{r})\right]
\end{equation}
Let us define the Fourier weight $\rho_\alpha = \rho_{\bm{Q}_\alpha}\cos(\bm{Q}_\alpha\cdot \bm{r})$ and, discretizing this equation for the lattice, we obtain the final current expression in the main text.

\section{Additional numerical results}\label{additional_numerical_results}

\begin{figure*}
    \centering
    \includegraphics[width=1\linewidth]{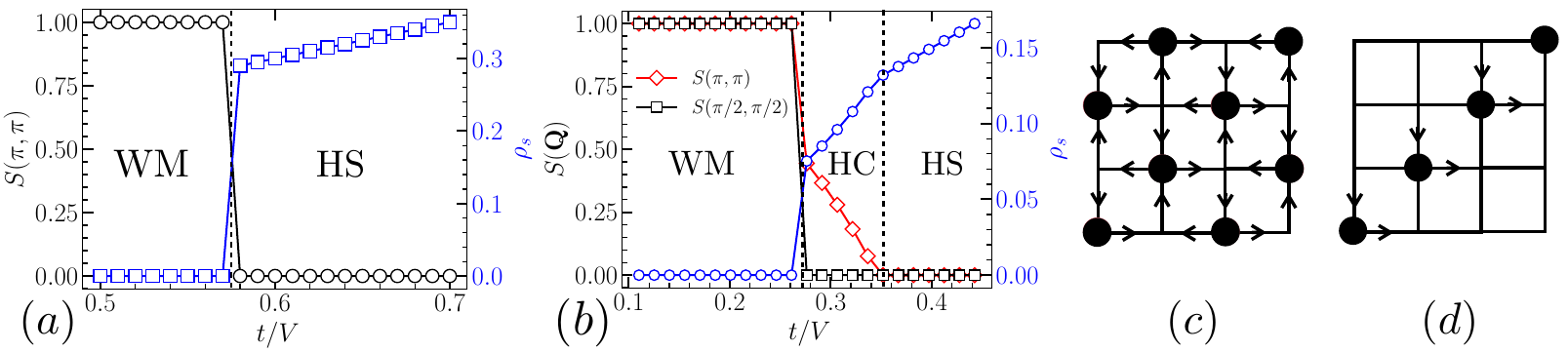}
    \caption{\textbf{(a)} Order parameters at $\bar{\rho}=1/2$. [$S(\pi,\pi)$ (black, circle), $\rho_s$ (blue, square)]. \textbf{(b)} Order parameters at $\bar{\rho}=1/4$. [$S(\pi,\pi)$ (red, diamond), $S(\pi/2,\pi/2)$ (black, square), $\rho_s$ (blue, circle)]. The structure factors are normalized with respect to their maximum value. \textbf{(c)} Checkerboard WM insulator at $\bar{\rho}=1/2$. \textbf{(d)} Diagonal striped WM insulator at $\bar{\rho}=1/4.$ Empty sites denote zero density, black circles denote unit density. The arrows on the links denote the direction of the orbital currents.}
\label{fig:crystal_orders_half_quarter}
\end{figure*}

In this appendix, we provide some additional numerical results. First, we show line cuts at fixed $\nu V/\omega_c$ for half and quarter lattice filling. These are obtained at $\nu V/\omega_c=1$ and $\nu=1/3$. At these fillings, at smaller $\nu V/\omega_c$, we do \textit{not} find an AC. 

At $\bar{\rho}=1/2$, we find a first order transition between a Hall state and a checkerboard WM insulator. At $\bar{\rho}=1/4$, we find a Hall state, a checkerboard HC and a diagonal striped WM insulator. The WM insulator and the HC are separated by a first order transition, while the HC and HS are separated by a continuous phase transition. 

\begin{figure}
    \centering
    \includegraphics[width=0.6\linewidth]{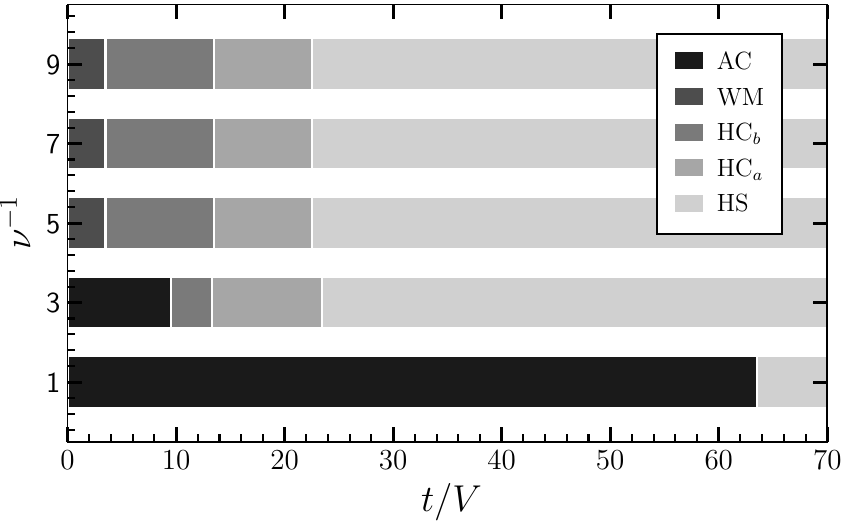}
    \caption{$\nu$ dependence of the $\bar{\rho}=9/16$ phases at fixed $\nu V/\omega_c=0.0033$.}
    \label{fig:nudependence}
\end{figure}

In Fig.~\ref{fig:nudependence}, we examine the dependence of the phase boundaries on $\nu$, at fixed $\nu V/\omega_c$. We choose $\bar{\rho}=9/16$ and $\nu V/\omega_c=0.0033$. At $\nu=1$, we only find two phases, a vortex lattice of composite bosons and the Hall state, separated by a first-order phase transition. This vortex lattice is an integer HC with two electrons and two holes. We have discussed the $\nu=1/3$ case before. For $\nu\leq 1/5$, the AC is replaced by the WC; furthermore, we find that the locations of the phase boundaries weakly move with $\nu$. Note that this does not necessarily imply that the AC is only obtained at $\nu=1/3$; for smaller values of $\nu V/\omega_c$, it may be possible to obtain the AC at smaller values of $\nu$ as well. However, this suggests that the AC in this parameter regime is most easily stabilized for the largest odd denominator LL filling fractions.

\end{document}